\newtheorem{Lemma}{Lemma} \newtheorem{Theorem}{THEOREM}
\newtheorem{prop}{Proposition} 
 \theoremstyle{definition}
\newtheorem{remark}{Remark}
\newcommand\nn\nonumber
 \newcommand{\R}{\mathbb{R}}
 \newcommand{\N}{\mathbb{N}}
\newcommand{\Aa}{\mathcal{A}}
\newcommand{\Cc}{\mathcal{C}}
\renewcommand\rho\varrho
\renewcommand\epsilon\varepsilon
 \DeclareMathOperator{\const}{const}
 \DeclareMathOperator{\re}{Re}
\begin{document}

\title[Lieb-Thirring Inequality for Particles with Point
  Interactions]{Lieb-Thirring Inequality for a Model of Particles with Point
  Interactions}

\author[R.L. Frank]{Rupert L. Frank} 
\address{Department of Mathematics, Princeton University, Princeton, NJ 08544, USA}
\email{rlfrank@math.princeton.edu}

\author[R. Seiringer]{Robert Seiringer}
\address{Department of Mathematics and Statistics, McGill
  University, 805 Sherbrooke Street West, Montreal, QC H3A 2K6,
  Canada} 
\email{robert.seiringer@mcgill.ca}

\date{December 23, 2011}

\begin{abstract}
  We consider a model of quantum-mechanical particles interacting via
  point interactions of infinite scattering length. In the case of
  fermions we prove a Lieb-Thirring inequality for the energy, i.e.,
  we show that the energy is bounded from below by a constant times
  the integral of the particle density to the power $\frac 53$.
\end{abstract}

\thanks{\copyright\, 2011 by  
  the authors. This paper may be reproduced, in its entirety, for
  non-commercial purposes.\\ Submitted to Journal of Mathematical Physics for the special issue celebrating Elliott Lieb's $80^{\rm th}$ birthday.}

\maketitle

\centerline{
{\it To Elliott Lieb, in appreciation of many years of fruitful and inspiring collaboration.}}
\bigskip\medskip

\section{Introduction}

Quantum-mechanical models of particles with point interactions have
become relevant in cold-atom physics, where one encounters systems
where the range of the interactions among the atoms can be much
shorter than their average distance, while the scattering length can
be much larger. In the limit of zero interaction range and infinite
scattering length, which is referred to as the \lq\lq unitary limit\rq\rq, one
is left with a system without intrinsic length scale. For further
discussion of this topic we refer to \cite{JHP,BH,BPST,WC,GW} and, in
particular, to the articles in \cite{zwerger}.

While the two-body problem for particles interacting via point
interactions is well understood \cite{albb}, it remains an open
problem to establish the existence of a model for $N>2$ particles with
only two-body point interactions (see \cite{DFT,TF} and references
there). Such a model can only be expected to exist for spin $\frac 12$
fermions, due to the Efimov effect \cite{efimov}, i.e., the existence
of three-body bound states despite the absence of two-body bound
states. We consider here a model with more complicated point interactions
which is manifestly well-defined, in the sense the quadratic form for
the energy is positive. This model is well-defined even for bosons, and
does not allow for any bound states even in this case.

We shall prove that the model we consider satisfies a Lieb-Thirring inequality,
i.e., the energy of fermions is bounded from below a
semiclassical approximation to the kinetic energy. Up to the value of
the constant, this inequality is the same as the one obtained by Lieb
and Thirring \cite{LT1,LT2} for the kinetic energy only, i.e., in the
absence of any interaction. We recall that this inequality of Lieb and
Thirring played a crucial role in a short and elegant proof of the
stability of matter \cite{stabmatt}, which had been proved earlier by
Dyson and Lenard \cite{DL} by different means.

Since our model concerns interacting particles, we cannot use the
method of Lieb and Thirring to reduce the problem to the spectral
analysis of a one-body operator. Instead, our strategy is closer in
spirit to the work of Dyson and Lenard \cite{DL}, where the Pauli
principle enters only via a local exclusion principle, which implies
that the local kinetic energy cannot be zero for more than one
particle. A similar strategy was recently employed in \cite{LuSo} to
study a system of anyons in two dimensions, and ideas from \cite{LuSo}
are crucial to our proof.

\section{Model and Main Result}

Let $N\geq 2$, $X=(x_1,\dots,x_N)\in \R^{3N}$ and let $g:\R^{3N}\to \R$ denote the function 
\begin{equation}\label{def:g}
g(X) = \sum_{1\leq i<j\leq N} \frac 1{|x_i-x_j|} \,.
\end{equation}
Our model is defined via the quadratic form
\begin{equation}\label{def:Q}
Q(f) = \sum_{i=1}^N  \int_{\R^{3N}} g(X)^2 \left| \nabla_i f(X) \right|^2 dX
\end{equation}
on $L^2(\R^{3N},g(X)^2 dX)$, where $\nabla_i$ stands for the gradient
with respect to $x_i\in \R^3$, and $dX = \prod_{k=1}^N dx_k$. The norm
on the space $L^2(\R^{3N},g(X)^2 dX)$ will simply be denoted by $\|\,\cdot\,\|$. The model
(\ref{def:Q}) was introduced in \cite{alb}, where it was shown that
the quadratic form $Q(f)$ gives rise to a non-negative
self-adjoint operator with purely absolutely continuous spectrum
$[0,\infty)$.

We denote by $\Aa_q^N \subset L^2(\R^{3N},g(X)^2 dX)$ those functions
$f$ that have the property that there is a partition of
$\{1,\dots,N\}$ into $q$ disjoint subsets such that $f$ is
antisymmetric in the variables corresponding to each subset. In
particular, $\Aa_1^N$ are the totally antisymmetric functions, while
$\Aa_N^N$ are all functions in $L^2(\R^{3N},g(X)^2 dX)$. Note that
$\Aa_q^N\subset \Aa_{q+1}^N$. The functions $f\in \Aa_q^N$ thus
represent the spatial part of totally anti-symmetric functions of
space and an internal degree of freedom (\lq\lq spin\rq\rq), where the spin is
allowed to take $q$ different values.

For given $f \in L^2(\R^{3N},g(X)^2 dX)$ we define its density $\rho_f\in L^1(\R^3)$ by 
\begin{equation}\label{def:rho}
\rho_f(x) = \sum_{i=1}^N \int_{\R^{3(N-1)}} g(\hat X_i)^2 |f(\hat X_i)|^2 d\hat X_i
\end{equation}
with $\hat X_i=(x_1,\dots,x_{i-1},x,x_{i+1},\dots,x_N)$ and $d\hat
X_i = \prod_{j\neq i} dx_j$. Our main result is the following
Lieb-Thirring type inequality.

\begin{Theorem}\label{thm1}
 For some constant $C>0$ (independent of $N$ and $q$) we have 
\begin{equation}\label{met}
Q(f) \geq \frac C{q^{2/3}} \int_{\R^3} \rho_f(x)^{5/3} dx 
\end{equation}
for all $f\in H^1(\R^{3N},g(X)^2 dX) \cap \Aa_q^N$ with $\|f\|=1$.
\end{Theorem}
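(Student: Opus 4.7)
I would follow the local-exclusion strategy of Lundholm--Solovej \cite{LuSo}, which the introduction identifies as the model for the argument. The plan combines a local uncertainty principle and a local exclusion principle on cubes $Q\subset\R^3$ with a Calder\'on--Zygmund-type covering to extract the global $\int\rho_f^{5/3}$ bound.

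\emph{Local uncertainty.} For a cube $Q\subset\R^3$ of side $L$, set
\[
T_Q(f)\;:=\;\sum_{i=1}^N\int_{\R^{3N}}g(X)^2|\nabla_i f(X)|^2\,\chi_Q(x_i)\,dX,\qquad n_Q(f)\;:=\;\int_Q\rho_f.
\]
Freezing the spectator variables $\hat X_i=(x_1,\dots,x_{i-1},\,\cdot\,,x_{i+1},\dots,x_N)$ and applying a one-variable Poincar\'e inequality to $x_i\mapsto f(\hat X_i)$ on $Q$, then integrating against $g(\hat X_i)^2\prod_{j\neq i}dx_j$ and summing over $i$, should produce a bound of the schematic form
\[
T_Q(f)\;+\;c L^{-2}\,\bigl(\text{``low mode'' of }f\text{ on }Q\bigr)\;\geq\;c L^{-2}\,n_Q(f).
\]

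\emph{Local exclusion and covering.} The partial antisymmetry $f\in\Aa_q^N$ forces a Pauli-type filling on $Q$: in each of the $q$ antisymmetric blocks only one particle at a time can sit in the ground state of the relevant weighted Dirichlet operator on $Q$, so orthogonality in $\langle\cdot,\cdot\rangle_{g^2 dX}$ yields $T_Q(f)\geq c L^{-2}\bigl(n_Q(f)-Kq\bigr)_{+}$ for a suitable $K$, absorbing the low-mode remainder of the uncertainty step whenever $n_Q(f)$ exceeds a multiple of $q$. A Calder\'on--Zygmund-type stopping construction on dyadic cubes then selects a disjoint family $\{Q_\alpha\}$ with $n_{Q_\alpha}(f)\asymp q$ covering the support of $\rho_f$; on each such cube the combined bound gives $T_{Q_\alpha}(f)\gtrsim L_\alpha^{-2}\,q$, and a standard H\"older / maximal-function argument (following \cite{LuSo}) converts $\sum_\alpha L_\alpha^{-2}q$ into $q^{-2/3}\int_{\R^3}\rho_f^{5/3}$, yielding (\ref{met}).

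The main obstacle is the local uncertainty step. Because $g(X)^2$ is an intrinsically $N$-body weight whose $x_i$-dependence is driven by singular terms $|x_i-x_j|^{-2}$, showing that the Poincar\'e constant on $Q$ scales as $L^{-2}$ uniformly in the spectator positions $\hat X_i$ is delicate. Two natural routes are: (a) bound $g(X)^2$ below by the sum of pair contributions not involving $x_i$, thereby removing the $x_i$-singularities in the weight at the cost of a multiplicative constant; or (b) verify that $g^2$ behaves like a Muckenhoupt $A_2$ weight in $x_i$ uniformly in $\hat X_i$, so that the standard weighted Poincar\'e theory applies. Once this is in hand, the rest of the argument is a direct weighted adaptation of \cite{LuSo}.
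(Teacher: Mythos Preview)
Your overall architecture --- local exclusion plus local uncertainty on cubes, assembled via a Lundholm--Solovej covering --- is exactly the paper's, but both local ingredients as you describe them have genuine gaps.

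\textbf{Exclusion.} The argument ``only one particle per antisymmetric block can sit in the ground state of the weighted Dirichlet operator on $Q$'' does not work here. The one-body Neumann operator $-g^{-2}\nabla_i\cdot g^2\nabla_i$ on $L^2(Q,g^2\,dx_i)$ depends, through $g$, on \emph{all} the spectator positions $x_j$, $j\neq i$; its ground state is a different function of $x_i$ for each configuration of the spectators. Antisymmetry within a block therefore does not translate into orthogonality to a fixed lowest mode, and the usual Pauli counting breaks down. The paper bypasses this by working at the \emph{two-body} level: antisymmetry in a pair $(x_i,x_j)$ forces $\int_{\Cc\times\Cc} f\,dx_i\,dx_j=0$ regardless of $g$, so a weighted Poincar\'e inequality on $\Cc\times\Cc\subset\R^6$ with weight $\omega(x_i,x_j)=g(X)^2$ yields the spectral gap. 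Your route~(b) --- checking a Muckenhoupt-type condition --- is precisely what is done, but for $\omega$ as a weight in the six variables $(x_i,x_j)$, uniformly in $(x_k)_{k\neq i,j}$; route~(a) in the one-variable form you propose goes in the wrong direction (you would need an \emph{upper} bound on the weight in the gradient term, not a lower bound).

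\textbf{Uncertainty.} The schematic bound $T_Q(f)+cL^{-2}(\text{low mode})\geq cL^{-2}n_Q(f)$ is too weak: it contains no $\rho_f^{5/3}$ and is essentially redundant with the exclusion bound. In the covering step you need, on cubes where $\rho_f$ is concentrated (the ``$A$-cubes'' with $\int_Q\rho_f^{5/3}$ much larger than $|Q|^{-2/3}(\int_Q\rho_f)^{5/3}$), a genuine kinetic lower bound by $\int_Q\rho_f^{5/3}$; H\"older alone gives the inequality in the wrong direction. The paper supplies this through a weighted Hoffmann--Ostenhof inequality $Q(f)\geq \tfrac19\int_{\R^3}|\nabla\sqrt{\rho_f}|^2$, proved by decomposing $g^2$ into its translation-invariant pieces and using $(\nabla_i+\nabla_j)|x_i-x_j|^{-1}=0$; Poincar\'e--Sobolev on each cube then yields the needed local $\rho_f^{5/3}$ control. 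This is a separate, nontrivial ingredient that your one-variable Poincar\'e does not replace.
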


As mentioned in the introduction, Lieb and Thirring proved
(\ref{met}) in the case $g\equiv 1$. Improved bounds on the constant
$C$ in this case were obtained in \cite{DLL} (see also \cite{LW,Hu}).

For $\Omega\subset \R^3$ open and with finite measure, the ground
state energy for $N$ particles confined to $\Omega$ equals
\begin{equation}
E(N,q,\Omega) = \inf \left\{ Q(f) \, : \, f \in C_0^\infty(\Omega^N   ) \cap \Aa_q^N ,\ \|f\|=1 \right\}
\end{equation}
for our model. 
An immediate corollary of (\ref{met}) and H\"older's inequality is that
\begin{equation}
E(N,q,\Omega) \geq \frac C{q^{2/3}} \frac{N^{5/3}}{|\Omega|^{2/3}}\,,
\end{equation}
where $|\Omega|$ denotes the volume of $\Omega$.

\begin{remark}
With $\Delta_k$ denoting the Laplacian with respect to the variables $x_k$, we have $\sum_{k=1}^N  \Delta_k \, g(X)=0$ for all $X\in \R^{3N}$ with $|x_i-x_j|>0$ for all $1\leq i<j\leq N$.
If we define
\begin{equation}
T_\epsilon= \{ X \in \R^{3N} \, :\, |x_i-x_j|>\epsilon \ \text{$\forall 1\leq i<j\leq N$}\}
\end{equation}
for $\epsilon>0$, then an integration by parts gives
\begin{align}\nonumber
\sum_{i=1}^N  \int_{T_\epsilon} g(X)^2 \left| \nabla_i f(X) \right|^2 dX & = \sum_{i=1}^N\int_{T_\epsilon}  \left| \nabla_i g(X) f(X) \right|^2 dX \\ & \quad - 2 \int_{\partial T_\epsilon} |f(X)|^2 g(X) \sum_{1\leq j<k\leq N} \frac{1}{|x_j-x_k|^2} dS \,, \label{intp}
\end{align}
where $dS$ denotes the induced surface measure on $\partial
T_\epsilon$, the boundary of $T_\epsilon$. Hence our model indeed
describes particles with point interactions supported on the union of
the hyperplanes defined by $x_i=x_j$ for $1\leq i<j\leq N$.

Note that the last term in (\ref{intp}) is negative and vanishes in
the limit $\epsilon\to 0$ if $|f(X)|^2$ vanishes faster then linearly
on the hyperplanes where $x_i=x_j$. In particular, if $f(X) =
\Psi(X)/g(X)$ for $\Psi\in C_0^\infty(\R^{3N})$, then
\begin{equation}
Q(f) = \sum_{i=1}^N  \int_{\R^{3N}}  \left| \nabla_i \Psi(X) \right|^2 dX \,.
\end{equation}
In general, the functions $f$ need not vanish for coinciding
arguments, however, and the energy can be lowered due to the
attractive nature of the last term in (\ref{intp}).
\end{remark}

\begin{remark}
  If we replace $1/|x|$ in (\ref{def:g}) by a smooth, strictly
  positive functions $\varphi(x)$ and define, accordingly, $\widetilde
  g(x) = \sum_{1\leq i<j\leq N} \varphi(x_i-x_j)$ and $\Psi(X) =
  \widetilde g(X) f(X)$, then an integration by parts gives
\begin{align}\nonumber
  \sum_{i=1}^N \int_{\R^{3N}} \widetilde g(X)^2 \left| \nabla_i f(X)
  \right|^2 dX & = \sum_{i=1}^N\int_{\R^{3N}} \left| \nabla_i \Psi(X)
  \right|^2 dX \\ & \quad + \int_{\R^{3N}} | \Psi(X)|^2 \frac{
    \sum_{k=1}^N \Delta_k \widetilde g(X)}{\widetilde g(X)} dX
  \,. \label{intpa}
\end{align}
The effective potential $\widetilde g(X)^{-1} \sum_{k=1}^N \Delta_k
\widetilde g(X)$ can alternatively be written as
\begin{equation}\label{sumrep}
2 \sum_{1\leq i<j\leq N}  \frac{ \Delta \varphi(x_i-x_j)}{\widetilde g(X)}
\end{equation}
and thus contains not only two-body terms but terms involving arbitrarily
many particles. In particular, the presence of other particles besides
$i$ and $j$ decreases the summands in (\ref{sumrep}) and hence weakens
the interaction. This weakening leads to improved stability properties
as compared to the case with only two-body interaction.
\end{remark}

We note that our main result Theorem~\ref{thm1} holds for more general
functions $g(X)$. As an example, one could replace $g(X)$ by
\begin{equation}
\sum_{1\leq i<j\leq N} \left( \frac 1{|x_i-x_j|} - \frac 1 a\right)
\end{equation}
for $a<0$, corresponding to point interactions with finite scattering length
$a$. A Lieb-Thirring inequality holds also in this case, with a
constant $C$ that is bounded from below uniformly in $a$. As $a\to 0$
one obtains non-interacting particles. 

Our method can also be applied to analogous models in dimensions $n>3$,
where the function $1/|x|$ in (\ref{def:g}) should be replaced by
$|x|^{2-n}$, the Green's function of the Laplacian in $\R^n$.

The remainder of this paper is concerned with the proof of
Theorem~\ref{thm1}. The proof will be split into two main parts. In
the first part in Section~\ref{sec:lep}, we prove a local exclusion
principle (Prop.~\ref{fL}) which states that the energy of $n$
particles in a finite cube is strictly positive for $n>q$ and grows at least 
like $n-q$, in fact. The second part in Section~\ref{sec:h2o} concerns
the proof that the energy $Q(f)$ dominates the $L^2(\R^3)$ norm of
$\nabla \sqrt{\rho_f}$. The inequality we prove in
Prop.~\ref{lemma:db} is, up to a constant, equal to a well-known
inequality by the Hoffmann-Ostenhofs \cite{h2o} in the case without
interaction. In Section~\ref{sec:final} we demonstrate how to obtain
Theorem~\ref{thm1} from Propositions~\ref{fL} and~\ref{lemma:db}. The
strategy of the proof is  similar to the one in \cite{LuSo} where
a Lieb-Thirring inequality was proved for anyons in two dimensions.

\section{Local Exclusion Principle}\label{sec:lep}

In this section we shall prove the following.

\begin{prop}\label{fL}
Let $f \in H^1(\R^{3N},g(X)^2dX) \cap \Aa_q^N$ with $\|f\|=1$, and let $\Cc_L\subset \R^3$ be a cube of side length $L>0$. Then 
\begin{equation}\label{ers}
\sum_{i=1}^N  \int_{\R^{3N}} g(X)^2 \left| \nabla_i f(X) \right|^2 \chi_{\Cc_L}(x_i) dX \geq \frac k{L^2} \left( \int_{\Cc_L} \rho_f(x) dx  - q \right)
\end{equation}
for a constant $k>0$ independent of $N$, $f$, $L$ and the location of the cube $\Cc_L$. 
\end{prop}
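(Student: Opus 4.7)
The argument I propose combines antisymmetric pigeonhole with a weighted Poincar\'e inequality on the cube. Let $I_1,\dots,I_q$ be a partition of $\{1,\dots,N\}$ witnessing $f\in\Aa_q^N$, and set $m_s(X)=\#\{i\in I_s:x_i\in\Cc_L\}$, so $m(X):=\sum_i\chi_{\Cc_L}(x_i)=\sum_s m_s(X)$. Since at most $q$ of the $m_s$ can be positive, the elementary inequality
\begin{equation*}
\sum_{s=1}^q (m_s(X)-1)_+ \;=\; m(X)-\#\{s:m_s(X)\geq 1\} \;\geq\; m(X)-q
\end{equation*}
holds pointwise. Summing over $s$, (\ref{ers}) will therefore follow from the \emph{single-subset} bound
\begin{equation*}
\sum_{i\in I_s}\int g(X)^2|\nabla_i f|^2\chi_{\Cc_L}(x_i)\,dX \;\geq\; \frac{k}{L^2}\int(m_s(X)-1)_+\,g(X)^2|f|^2\,dX,
\end{equation*}
which I would prove separately for each $s$.

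For this, I would freeze the positions of all particles outside $I_s$ and of those $I_s$-particles lying outside $\Cc_L$. Writing $m=m_s$ and $Y=(y_1,\dots,y_m)\in\Cc_L^m$ for the mobile variables, $F(Y):=f(X)$ is antisymmetric in $Y$ (by the $I_s$-antisymmetry of $f$) while the weight $W(Y):=g(X)^2$ is symmetric in $Y$ (since $g$ is permutation-invariant). The single-subset statement then reduces to the weighted Poincar\'e inequality
\begin{equation*}
\sum_{i=1}^m\int_{\Cc_L^m}W(Y)|\nabla_{y_i}F(Y)|^2\,dY \;\geq\; \frac{k(m-1)}{L^2}\int_{\Cc_L^m}W(Y)|F(Y)|^2\,dY
\end{equation*}
for all antisymmetric $F$ and symmetric $W\geq 0$, with $k$ independent of $W$, $L$, and $m$.

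For the base case $m=2$, I would pass to centre-of-mass and relative coordinates $u=(y_1+y_2)/2$, $v=y_1-y_2$: antisymmetry makes $F$ odd in $v$ and symmetry makes $W$ even in $v$. The unconditional three-dimensional Hardy inequality $\int_{\R^3}|\nabla H|^2\,dv\geq \tfrac14\int|H|^2/|v|^2\,dv$, combined with $|v|\leq\sqrt{3}\,L$ inside $\Cc_L^2$, should supply the $1/L^2$ gap; the algebraic coincidence $W=g^2\geq 1/|y_1-y_2|^2$ is what I expect will allow the non-Coulomb parts of $W$ to be accommodated without degrading the constant. For $m\geq 3$, I would sum these pair bounds over $(i,j)$-pairs both in $\Cc_L$ and divide out the combinatorial over-counting to restore the factor $m-1$.

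The principal obstacle lies in the multi-particle step. A naive pairwise summation typically yields only the Lieb--Thirring-type growth $m^{2/3}$, whereas the unweighted Neumann Laplacian on $\Cc_L^m$ already furnishes the linear gap $(m-1)\pi^2/L^2$ on antisymmetric functions (using that the one-body Neumann eigenvalues on the cube satisfy $\epsilon_k\geq \pi^2/L^2$ for $k\geq 2$). Transferring this linear gap cleanly to the $W=g^2$-weighted problem, with a constant independent of $W$, is where I expect the main technical work to be, and it is precisely the point at which the specific sum-of-Coulomb structure of $g$, as opposed to a general positive symmetric weight, ought to be exploited.
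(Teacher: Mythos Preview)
Your high-level decomposition---pigeonhole reduction to single subsets, a two-particle base estimate, and pairwise summation to reach the factor $m-1$---is exactly the skeleton of the paper's proof. Two points, however, are misjudged in opposite directions.

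\medskip
\textbf{The multi-particle step is not the obstacle.} Your final paragraph worries that pairwise summation yields only $m^{2/3}$; it does not. Once the two-particle bound
\[
\int_{\Cc_L^{m}} W\bigl(|\nabla_{y_i}F|^2+|\nabla_{y_j}F|^2\bigr)\,dY \;\geq\; \frac{2k}{L^2}\int_{\Cc_L^{m}} W\,|F|^2\,dY
\]
holds for every pair $i<j$ (with all other coordinates frozen), summing over the $\binom{m}{2}$ pairs counts each $|\nabla_{y_i}F|^2$ exactly $m-1$ times, giving the linear factor $m$ (a fortiori $m-1$) directly. This is precisely what the paper does; no Neumann spectral argument is needed.

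\medskip
\textbf{The two-particle base case is the real gap.} Your proposed route via centre-of-mass/relative coordinates and the Hardy inequality does not yield the weighted Poincar\'e inequality you need. Hardy gives control of $\int |F|^2/|v|^2$ by $\int |\nabla_v F|^2$, but the target inequality carries the \emph{same} weight $W=g^2$ on both sides, and $W$ can be dominated by the terms $1/|y_i-x_\ell|$ coming from the frozen particles, not by $1/|v|^2$. The observation $W\geq 1/|v|^2$ goes in the wrong direction for the right-hand side. What is actually required is a weighted Poincar\'e inequality on $\Cc_L\times\Cc_L$ for functions of zero mean (which antisymmetry supplies), with a constant \emph{uniform over all admissible weights} $W=g(\cdot,\cdot,x_3,\dots,x_N)^2$. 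The paper obtains this via a maximal-function / $A_\infty$-type criterion: it shows that the quantity
\[
M_\omega \;=\; \sup_{r>0,\,x\in\Omega}\ \frac{1}{|\Omega\cap B_r(x)|}\,\frac{\int_{\Omega\cap B_r(x)} \omega}{\inf_{\Omega\cap B_r(x)} \omega}
\]
is bounded for $\omega=g^2$ on $\Omega=\Cc^2$, uniformly in $N$ and in the frozen coordinates. This verification (comparing averages of $g^2$ over balls to its infimum) is exactly where the sum-of-Coulomb structure of $g$ enters, and it is the one nontrivial analytic ingredient. Your sketch does not supply an alternative mechanism for this uniformity.
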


Here and in the following, $\chi_{Q}$ denotes the characteristic function of a set $Q\subset \R^3$. 
The constant $k$ appearing in (\ref{ers}) is the same as the one in Lemma~\ref{pL} below.

We note that this proposition implies that, for any integer $M\geq 1$, 
\begin{equation}\label{te}
E(N,q,\Cc_L) \geq k  \frac{ N M^2  - q M^5}{L^2} \,.
\end{equation}
To prove (\ref{te}),  simply divide the cube $\Cc_L$ into $M^3$ disjoint cubes with side length $L/M$ and apply Proposition~\ref{fL} for every cube.
The optimal choice of $M$ is close to $(5N/2q)^{1/3}$. In particular, the maximum over $M$ of the right side of (\ref{te}) is proportional to  $k N^{5/3}/(L^2q^{2/3})$ for large $N$.

The proof of Proposition~\ref{fL} will be divided into several lemmas. 
The following lemma is well-known \cite{cw}; we include its proof for the convenience of the reader.

\begin{Lemma} Let $\Omega$ be an open and convex subset of $\R^n$ with diameter $d<\infty$, and let $\omega$ be a strictly positive function in $L^1(\Omega)$. For all $f\in H^1(\Omega,\omega(x)dx)$ with $\int_\Omega f(x) dx = 0$, we have
\begin{equation}
\int_\Omega |\nabla f(x)|^2 \omega(x) dx \geq \frac 1{2^3 3^n |\mathbb{B}^n|^2 M_\omega^3} \frac{|\Omega|^2}{d^{2(n+1)}} \int_\Omega |f(x)|^2 \omega(x) dx \,,
\end{equation}
where $|\mathbb{B}^n|$ denotes the volume of the unit ball in $\R^n$ and 
\begin{equation}\label{def:mo}
M_\omega = \sup_{r>0,x\in\Omega} \frac 1{|\Omega\cap B_r(x)|} \frac{ \int_{\Omega\cap B_r(x)}  \omega(z) dz}{\inf_{z\in \Omega\cap B_r(x)} \omega(z)}
\end{equation}
with $B_r(x)$ denoting the ball of radius $r$ centered at $x$.
\end{Lemma}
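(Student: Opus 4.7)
The plan is to derive a pointwise Riesz-type estimate on $f$ using the mean-zero hypothesis, and then upgrade it to the weighted $L^2$ inequality via the quantity $M_\omega$.

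Since $\int_\Omega f\,dx = 0$, we may write $f(x) = |\Omega|^{-1}\int_\Omega(f(x)-f(y))\,dy$. Convexity of $\Omega$ guarantees that the segment from $y$ to $x$ lies in $\Omega$, so the fundamental theorem of calculus gives $f(x)-f(y) = -\int_0^1 \nabla f(x+t(y-x))\cdot(y-x)\,dt$. Substituting $z = x+t(y-x)$ for fixed $x$ and interchanging the order of integration (the $t$-range becomes $[|z-x|/d,\,1]$ since $|y-x|\le d$) yields the pointwise bound
\[
|f(x)| \leq \frac{d^n}{n\,|\Omega|}\int_\Omega \frac{|\nabla f(z)|}{|z-x|^{n-1}}\,dz.
\]

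To turn this into the weighted $L^2$ inequality, I would use a Hedberg-style splitting. For each $x$ and a parameter $r>0$, cut the $z$-integral at $|z-x|=r$. On the near region $|z-x|<r$, a dyadic decomposition of $B_r(x)$ into shells $\{r\,2^{-k-1}\le |z-x|\le r\,2^{-k}\}$ expresses the contribution as a sum of averages of $|\nabla f|$ over $\Omega\cap B_{r 2^{-k}}(x)$; the definition of $M_\omega$ (an $A_\infty$-type comparison of the average of $\omega$ on a ball to its infimum there) lets one pass from each such average to a local $L^2(\omega)$-norm of $|\nabla f|$ at the cost of one factor of $M_\omega^{1/2}$, and the geometric series in $k$ sums to roughly $r$ times a maximal function of $|\nabla f|^2\omega$. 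On the far region $|z-x|>r$, Cauchy--Schwarz against $\omega^{1/2}$ gives
\[
\int_{\Omega\setminus B_r(x)}\frac{|\nabla f(z)|}{|z-x|^{n-1}}\,dz \le \left(\int_{\Omega\setminus B_r(x)}\frac{\omega(z)^{-1}}{|z-x|^{2(n-1)}}\,dz\right)^{\!1/2}\|\nabla f\|_{L^2(\omega)},
\]
and the $\omega^{-1}$-integral is controlled by invoking $M_\omega$ again (to transfer $\omega^{-1}$ across dyadic annuli), yielding a bound proportional to $M_\omega^{1/2}\,r^{1-n/2}/(\inf_\Omega\omega)^{1/2}$. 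Optimizing in $r$, squaring, and integrating against $\omega(x)\,dx$ (where a third use of $M_\omega$ enters, on the outer integral) produces the stated inequality, with the three factors of $M_\omega$ combining into $M_\omega^3$ and the geometric factors combining into the stated $|\Omega|^2/d^{2(n+1)}$.

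The main obstacle is the second step. The pointwise Riesz bound is classical; the delicate point is orchestrating the dyadic decomposition and the Cauchy--Schwarz estimates so that (i) the final constant is explicit, (ii) $M_\omega$ appears to exactly the third power, and (iii) the geometric factors assemble into $|\Omega|^2/d^{2(n+1)}$ without extra logarithmic losses. This is where all the bookkeeping concentrates.
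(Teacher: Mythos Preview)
Your first step---the pointwise Riesz bound $|f(x)| \le \frac{d^n}{n|\Omega|}\int_\Omega |z-x|^{1-n}|\nabla f(z)|\,dz$---matches the paper exactly. The divergence is entirely in the second step.

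The paper does \emph{not} use a Hedberg splitting. It introduces the weighted maximal function
\[
m_\omega(x) = \sup_{r>0}\frac{\int_{\Omega\cap B_r(x)}|\nabla f|\,\omega\,dz}{\int_{\Omega\cap B_r(x)}\omega\,dz}
\]
and observes, straight from the definition of $M_\omega$, that $\int_{\Omega\cap B_r(x)}|\nabla f|\,dz \le |\mathbb{B}^n|r^n M_\omega\, m_\omega(x)$ for every $r$. Inserting this into the layer-cake identity $|x-z|^{1-n} = (n-1)\int_0^d r^{-n}\chi_{B_r(x)}(z)\,dr + d^{1-n}$ yields the \emph{pointwise} bound $|f(x)|\le \frac{d^{n+1}}{|\Omega|}|\mathbb{B}^n|M_\omega\, m_\omega(x)$, which after squaring contributes $M_\omega^2$. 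The remaining factor of $M_\omega$ comes from the weighted maximal inequality $\int_\Omega m_\omega^2\,\omega\,dx \le 2^3 N_\omega \int_\Omega |\nabla f|^2\omega\,dx$ for the doubling measure $\omega\,dx$ (Stein), together with $N_\omega \le 3^n M_\omega$. No integral of $\omega^{-1}$ ever appears, and the constants assemble with no further work.

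Your outline, by contrast, has a genuine gap at the far-region step: you produce a bound containing $(\inf_\Omega\omega)^{-1/2}$, but the hypotheses do not guarantee $\inf_\Omega\omega>0$, and the target inequality is independent of this quantity. It is true that $M_\omega$ is an $A_1$-type constant and hence implies an $A_2$-type control $\int_B\omega^{-1}\le M_\omega|B|^2\big/\!\int_B\omega$, so a Hedberg argument is not hopeless; but converting that into a far-region estimate that balances correctly against the near part after optimizing in $r$, and that delivers exactly $M_\omega^3$ with the precise prefactor $2^{-3}3^{-n}|\mathbb{B}^n|^{-2}|\Omega|^2 d^{-2(n+1)}$, is not carried out in your sketch and is where the difficulty actually lies. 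The paper's maximal-function route sidesteps all of this.
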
 

\begin{proof}
Since $\Omega$ is convex, we can write
\begin{equation}
f(x) - f(y) = \int_0^1 (x-y)\nabla f(t x + (1-t) y) dt \,.
\end{equation}
Integrating this identity over $y\in \Omega$ and changing variables to $z=tx + (1-t)y$ gives
\begin{equation}
f(x) = \frac 1{|\Omega|} \int_0^1 \frac 1{(1-t)^{n+1}} \int_{\Omega} (x-z) \nabla f(z) \chi_\Omega\left(\frac{z-tx}{1-t}\right) dz\,dt \,.
\end{equation}
Since $|x-z| = (1-t)|x-y|\leq (1-t) d$ for $x,y\in \Omega$, we obtain the bound
\begin{align}\nonumber
|f(x)| & \leq  \frac 1{|\Omega|}\int_{\Omega} |x-z| | \nabla f(z) | \int_0^{1-|x-z|/d} \frac 1{(1-t)^{n+1}}\,  dt\, dz \\ & \leq \frac {d^n}{n |\Omega|}\int_{\Omega} \frac 1{|x-z|^{n-1}} |\nabla f(z)| dz\,. \label{19}
\end{align}
For $x,z\in\Omega$, we have 
\begin{equation}
\frac 1{|x-z|^{n-1}} = (n-1) \int_0^d \frac 1{r^n} \chi_{B_r(x)}(z) dr + \frac 1{d^{n-1}}\,.
\end{equation}
Hence (\ref{19}) can be rewritten as 
\begin{equation}\label{21}
|f(x)| \leq  \frac {d^n}{n |\Omega|}\left[  (n-1) \int_0^d \frac 1 {r^n} \int_{\Omega\cap B_r(x)} |\nabla f(z)| dz \, dr +  \frac 1{d^{n-1}} \int_\Omega |\nabla f(z)| dz\right]\,.
\end{equation}

We introduce the maximal function
\begin{equation}
m_\omega(x) = \sup_{r>0} \frac{  \int_{\Omega\cap B_r(x)} |\nabla f(z)| \omega(z) dz}{\int_{\Omega\cap B_r(x)}  \omega(z) dz} \,.
\end{equation}
Then
\begin{equation}
\int_{\Omega\cap B_r(x)} |\nabla f(z)| dz \leq |\mathbb{B}^n| r^n M_\omega m_\omega(x)
\end{equation}
with $M_\omega$ defined in (\ref{def:mo}). 
In particular, from (\ref{21}) we obtain the bound 
\begin{equation}
|f(x)|\leq  \frac {d^{n+1}}{ |\Omega|} |\mathbb{B}^n| M_\omega m_\omega(x)\,,
\end{equation}
and thus
\begin{equation}\label{25}
\int_\Omega |f(x)|^2 \omega(x) dx \leq \frac{d^{2(n+1)}}{|\Omega|^2} |\mathbb{B}^n|^2 M_\omega^2 \int_\Omega m_\omega(x)^2 \omega(x) dx\,.
\end{equation}

We further define
\begin{equation}
N_\omega = \sup_{r>0,x\in\Omega} \frac{ \int_{\Omega\cap B_{3r}(x)}  \omega(z) dz}{\int_{\Omega\cap B_r(x)}  \omega(z) dz }
\end{equation}
and note that $N_\omega \leq  3^n M_\omega$. 
Proceeding as in the proof of Theorem 1(c) in Sect. I.3.1 of \cite{stein} we see that 
\begin{equation}
\int_\Omega m_\omega(x)^2 \omega(x) dx \leq 2^3 N_\omega \int_\Omega |\nabla f(x)|^2 \omega(x) dx \,.
\end{equation}
In combination with (\ref{25}) this  proves our claim.
\end{proof}

Let now $\Cc = (0,1)^3 \subset \R^3$ denote the unit cube in $\R^3$.

\begin{Lemma}\label{pL} Assume that $f \in H^1(\R^3\times \R^3)$ is
  antisymmetric, i.e., $f(x_1,x_2) = -f(x_2,x_1)$ for $x_1,x_2\in
  \R^3$. For almost every $(x_3,\dots,x_N)\in \R^{3(N-2)}$ we have
\begin{equation}\label{il}
\sum_{i=1}^2   \int_{\Cc^{2}} g(X)^2 \left| \nabla_i f(x_1,x_2) \right|^2 
 dx_1dx_2 \\ \geq 2 k  \int_{\Cc^{2}} g(X)^2 |f(x_1,x_2)|^2 dx_1 dx_2
\end{equation}
for a constant $k>0$ that does not depend on $N$ or the variables $(x_3,\dots,x_N)$.
\end{Lemma}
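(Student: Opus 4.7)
The plan is to apply the preceding weighted Poincar\'e inequality to $f$ viewed as a function on the $6$-dimensional convex domain $\Omega=\Cc\times\Cc\subset\R^6$ (so $n=6$, diameter $d=\sqrt{6}$, volume $|\Omega|=1$), equipped with the weight $\omega(X)=g(X)^2$, treating $(x_3,\dots,x_N)$ as fixed parameters. Since the full six-dimensional gradient satisfies $|\nabla f|^2=|\nabla_1 f|^2+|\nabla_2 f|^2$, the left-hand side of (\ref{il}) is precisely $\int_\Omega |\nabla f|^2\omega\,dX$. The mean-zero hypothesis $\int_\Omega f\,dX=0$ follows immediately from the antisymmetry of $f$ together with the invariance of $\Omega$ under the swap $(x_1,x_2)\leftrightarrow(x_2,x_1)$.

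The only substantive work is to verify that $M_\omega$ defined in (\ref{def:mo}) is bounded by an absolute constant, independent of $N$ and of the parameters $(x_3,\dots,x_N)$. For this I would use the sum structure of $g$ together with the elementary subadditivity: if $\omega_1,\omega_2\geq 0$ satisfy $M_{\omega_j}\leq K$, then also $M_{\omega_1+\omega_2}\leq K$, because on any ball $B$ one has $\inf_B(\omega_1+\omega_2)\geq\inf_B\omega_1+\inf_B\omega_2$ while the ball-average is additive. Expanding $g^2=\sum_{\alpha<\beta,\,\gamma<\delta}\phi_{\alpha\beta}\phi_{\gamma\delta}$ with $\phi_{\alpha\beta}(X)=1/|x_\alpha-x_\beta|$, it thus suffices to bound $M$ for each product individually.

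Each such product, viewed as a function of $(x_1,x_2)\in\R^6$, is either a constant (when $\{\alpha,\beta,\gamma,\delta\}\subset\{3,\dots,N\}$), a single Coulomb factor $1/|x_a-c|^s$ with $a\in\{1,2\}$, $s\in\{1,2\}$, and $c\in\R^3$ fixed, a separated product $1/(|x_1-c|\,|x_2-c'|)$, a mixed product $1/(|x_1-x_2|\,|x_a-c|)$, or the singular $1/|x_1-x_2|^2$. Each case is handled by an explicit computation of the average-to-infimum ratio on Euclidean balls in $\R^6$; translation invariance and the scale-invariance of the Coulomb potential ensure that the resulting constants do not depend on the centers $c,c'$ or on the position and size of the ball. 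The most delicate case is the last: passing to coordinates $u=x_1-x_2$, $v=(x_1+x_2)/2$ (Jacobian $1$), a ball in $(x_1,x_2)$-space transforms into an ellipsoid of the form $|u-u_0|^2/2+2|v-v_0|^2\leq r^2$, and a direct computation of $\int|u|^{-2}\,du\,dv$ over this ellipsoid, together with the elementary bound $\bigl||x_1-x_2|-|u_0|\bigr|\leq\sqrt{2}\,r$, shows that the ratio in question is bounded uniformly in $u_0,v_0,r$.

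The main obstacle is thus this case analysis: each individual estimate is elementary, reducing essentially to a one-dimensional integral of a power of a distance, but several index patterns must be checked and uniformity in the free parameters $c,c'\in\R^3$ must be confirmed. Once $M_\omega\leq C$ has been established with $C$ absolute, the preceding Lemma yields (\ref{il}) with $2k$ equal to the corresponding explicit constant from that estimate.
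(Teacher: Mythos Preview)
Your proposal is correct and follows essentially the same strategy as the paper: apply the weighted Poincar\'e inequality on $\Omega=\Cc\times\Cc$ with $\omega=g^2$, using antisymmetry for the mean-zero condition, and then show $M_\omega$ is bounded uniformly in $N$ and in $(x_3,\dots,x_N)$ by expanding $g^2$ into its $\binom{N}{2}^2$ elementary terms and controlling each. The only organizational difference is that you invoke the subadditivity $M_{\omega_1+\omega_2}\le\max(M_{\omega_1},M_{\omega_2})$ to reduce to bounding $M$ for each product $\phi_{\alpha\beta}\phi_{\gamma\delta}$ separately, whereas the paper bounds $\int_B g^2$ and $\inf_B g$ directly (obtaining $\int_B \phi_{\alpha\beta}\phi_{\gamma\delta}\le C r^6(\inf_B\phi_{\alpha\beta})(\inf_B\phi_{\gamma\delta})$ and then summing); the underlying estimates are identical. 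Two minor cases are missing from your list --- $1/|x_1-x_2|$ times a constant and $1/(|x_a-c|\,|x_a-c'|)$ with $c\neq c'$ --- but both are easier than cases you do treat, so this is not a gap.
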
 

\begin{proof}
  Since $f$ is antisymmetric, it has zero average with respect to the
  variables $(x_1,x_2)$. Hence we can apply the previous lemma, with
  $\Omega = \Cc \times \Cc \subset \R^6$ and $\omega(x_1,x_2) =
  g(X)^2$. The claim is thus proved if we can show that the
  corresponding $M_\omega$ in (\ref{def:mo}) is bounded independently
  of $N$ and $(x_3,\dots,x_N)$.

  Consider a ball $B \subset \R^6$ of radius $r$ centered at some
  point $(w_1,w_2)\in \Cc\times \Cc$. Without loss of generality, we
  may assume that $r\leq \sqrt{6}$, for otherwise $\Cc\times \Cc \subset B$
  and hence the expression after the $\sup$ in (\ref{def:mo}) is independent
  of $r$, and is thus the same as for $r=\sqrt{6}$. We can get a lower bound on
  $g(X)$ by taking the minimum in each term in the sum in
  (\ref{def:g}). This gives
\begin{equation}
g(X) \geq \frac{1}{|w_1-w_2|+2r} + \sum_{j\geq 3}\left( \frac{1}{|w_1-x_j|+r} + \frac 1{|w_2-x_j|+r} \right) + \sum_{3\leq j<l\leq N} \frac 1{|x_j-x_l|}
\end{equation}
for $(x_1,x_2)\in B$. On the other hand, simple estimates yield
\begin{equation}
\int_B \frac{1}{|x_1-x_2|^2} dx_1 dx_2 \leq \const \frac {r^6} {(|w_1-w_2|+2r)^2} 
\end{equation}
and 
\begin{equation}
\int_B \frac{1}{|x_1-x_j||x_1-x_l|} dx_1 dx_2 \leq \const \frac {r^6} {(|w_1-x_j|+2r)(|w_1-x_l|+2r)} 
\end{equation}
for $j,l\geq 3$, etc. These imply, in particular, that 
\begin{equation}
\int_B g(X)^2 dx_1 dx_2 \leq \const r^6 \inf_{(x_1,x_2)\in B} g(X) \,.
\end{equation}
Since $\Omega$ is a cube in $\R^6$, we also have $|\Omega\cap B| \geq
2^{-6} r^6$ for $r\leq 1$, and hence $M_\omega$ is bounded
independently of $N$ and $(x_3,\dots,x_N)$.
\end{proof}

\begin{Lemma} Let $A\subset \{1,\dots,N\}$, $X_A = \{x_i\}_{i\in A}$ and $d X_A =\prod_{l\in A} dx_l$. For $f \in H^1(\R^{3N},g(X)^2 dX) \cap \Aa_q^N$  we have, for almost every $X\setminus X_A$, 
\begin{equation}\label{il2}
\sum_{i \in A}  \int_{\Cc^{|A|}} g(X)^2 \left| \nabla_i f(X) \right|^2 
 dX_A \\ \geq k  \left( |A|-q\right) \int_{\Cc^{|A|}} g(X)^2 |f(X)|^2 
   dX_A
\end{equation}
with $k$ as in Lemma~\ref{pL}.
\end{Lemma}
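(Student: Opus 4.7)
The plan is to deduce this $|A|$-variable inequality from the two-variable statement of Lemma~\ref{pL} by applying the latter to every pair of indices in $A$ that lies in a common block of the antisymmetry partition guaranteed by $f\in\Aa_q^N$.

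Concretely, let $P_1,\dots,P_q$ denote the partition of $\{1,\dots,N\}$ for which $f$ is antisymmetric in each $\{x_k\}_{k\in P_s}$, and set $A_s=A\cap P_s$, $n_s=|A_s|$, so that $\sum_s n_s=|A|$. For any distinct $i,j\in A_s$ the function $f$ is antisymmetric in the pair $(x_i,x_j)$, so Lemma~\ref{pL} applied with $(x_i,x_j)$ playing the role of $(x_1,x_2)$ gives
\begin{equation*}
\int_{\Cc^{2}} g(X)^2 \big(|\nabla_i f|^2+|\nabla_j f|^2\big)\,dx_i\,dx_j \;\geq\; 2k \int_{\Cc^{2}} g(X)^2|f|^2\,dx_i\,dx_j
\end{equation*}
for almost every choice of the remaining $N-2$ variables. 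Since both sides are nonnegative, Fubini lets me integrate this pointwise inequality over the remaining $|A|-2$ variables of $X_A$ in $\Cc^{|A|-2}$, producing the same bound with the $X_A$-integration carried out over all of $\Cc^{|A|}$.

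Next, I would sum this pair estimate over all $\binom{n_s}{2}$ unordered pairs $\{i,j\}\subset A_s$. Because each $i\in A_s$ appears in exactly $n_s-1$ such pairs,
\begin{equation*}
(n_s-1) \sum_{i\in A_s} \int_{\Cc^{|A|}} g(X)^2|\nabla_i f|^2\,dX_A \;\geq\; k\,n_s(n_s-1) \int_{\Cc^{|A|}} g(X)^2|f|^2\,dX_A .
\end{equation*}
When $n_s\geq 2$, dividing by $n_s-1$ yields the block estimate $\sum_{i\in A_s} \int g^2 |\nabla_i f|^2\,dX_A\geq k(n_s-1) \int g^2 |f|^2\,dX_A$; the same estimate is trivial when $n_s\leq 1$, since then the right-hand side is non-positive. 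Summing over $s=1,\dots,q$ and using $A=\bigsqcup_s A_s$ together with $\sum_s(n_s-1)=|A|-q$ then produces (\ref{il2}).

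The main difficulty has already been absorbed in Lemma~\ref{pL}; the present argument is essentially combinatorial bookkeeping built around that two-body inequality. The only point that requires a little care is the Fubini step promoting the almost-everywhere two-variable bound to an integrated bound over the remaining $A$-variables, but since the integrand is nonnegative and the $H^1(\R^{3N},g^2dX)$ hypothesis passes to a.e.\ slices, this raises no genuine obstacle.
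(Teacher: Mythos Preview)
Your argument is correct and follows essentially the same route as the paper: partition $A$ according to the antisymmetry blocks, apply Lemma~\ref{pL} to each admissible pair, and sum. One small remark: after you sum over the $\binom{n_s}{2}$ pairs and divide by $n_s-1$, the right-hand side is actually $k\,n_s\int g^2|f|^2\,dX_A$, not $k(n_s-1)\int g^2|f|^2\,dX_A$; the paper notes this slight strengthening as well, but the weaker constant you record is of course enough for the stated lemma.
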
 

\begin{proof} The set $A$ can be divided into $q$ subsets $B_j$ (some
  of which may be empty) such that $f$ is antisymmetric in the
  variables in each subset. We may assume that $|A|>q$, in which case
  at least one such subset $B_j$ contains more than one element. We
  shall show that for all such $B_j$
\begin{equation}\label{34}
\sum_{i \in B_j}  \int_{\Cc^{|B_j|}} g(X)^2 \left| \nabla_i f(X) \right|^2 
 dX_{B_j} \\ \geq k \left( |B_j|-1\right) \int_{\Cc^{|B_j|}} g(X)^2 |f(X)|^2 
   dX_{B_j}
\end{equation}
for each fixed $X\setminus X_{B_j}$. Summing over $j$ this implies the
result. 

Inequality  (\ref{34}) follows immediately from Lemma~\ref{pL},
noting that the sum over $|B_j|=n\geq 2$ elements can be written as
$1/[2(n-1)]$ times a sum over all ordered pairs, yielding the inequality with
$|B_j|-1$ replaced by $|B_j|$.
\end{proof}

Since $g(X)$ is invariant under translation and rotation of all the
coordinates $x_i$, inequality (\ref{il2}) clearly holds for all unit
cubes in $\R^3$, irrespective of their location or orientation. A
simple scaling argument shows that it actually holds for all cubes of
side length $L>0$ if the right side is divided by $L^2$.

\begin{proof}[Proof of Proposition~\ref{fL}]
Using that 
\begin{equation}
1 = \sum_{A\subset\{1,\dots,N\}} \prod_{l\in A} \chi_{\Cc_L}(x_l) \prod_{l\not\in A} \left( 1 - \chi_{\Cc_L}(x_l)\right)
\end{equation}
we have 
\begin{multline}
\sum_{i=1}^N  \int_{\R^{3N}} g(X)^2 \left| \nabla_i f(X) \right|^2 \chi_{\Cc_L}(x_i) dX \\ = \sum_{A\subset\{1,\dots,N\}} \sum_{i \in A}  \int_{\R^{3N}} g(X)^2 \left| \nabla_i f(X) \right|^2 
 \prod_{l\in A} \chi_{\Cc_L}(x_l) \prod_{l\not\in A} \left( 1 - \chi_{\Cc_L}(x_l)\right) dX \,.
\end{multline}
Applying the previous lemma (and the remark after its proof), we obtain the lower bound
\begin{equation}
\frac {k}{L^2}  \sum_{A\subset\{1,\dots,N\}} \left(|A|-q\right)  \int_{\R^{3N}} g(X)^2 \left| f(X) \right|^2 
 \prod_{l\in A} \chi_{\Cc_L}(x_l) \prod_{l\not\in A} \left( 1 - \chi_{\Cc_L}(x_l)\right) dX  \,,
\end{equation}
which is equal to the right side of (\ref{ers}).
\end{proof}

\section{Density Bound}\label{sec:h2o}

Recall that the one-particle density $\rho_f$ of a function $f\in
L^2(\R^{3N},g(X)^2 dX)$ is defined in (\ref{def:rho}).

\begin{prop}\label{lemma:db}
Let $f\in H^1(\R^{3N},g(X)^2 dX)$. Then its one-particle density $\rho_f$ satisfies $\sqrt{\rho_f}\in H^1(\R^3,dx)$ and
\begin{equation}\label{38}
Q(f) \geq \frac 1 9 \int_{\R^3} \Big| \nabla \sqrt{\rho_f(x)} \Big|^2 dx \,.
\end{equation}
\end{prop}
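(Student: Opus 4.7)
The strategy is a weighted Hoffmann--Ostenhof-style argument in three moves: differentiate $\rho_f$ under the integral, split the resulting integrand into a ``gradient of $|f|$'' piece and a ``gradient of $g$'' piece, bound each by Cauchy--Schwarz, and absorb the second piece via a Hardy-type inequality that exploits the harmonicity of the Coulomb kernel in the remaining variable.

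First, using $\nabla_i |f|^2 = 2|f|\nabla_i |f|$ in the real case (the complex case is recovered at the end from Kato's diamagnetic inequality $|\nabla_i|f||\le |\nabla_i f|$),
\begin{equation*}
\nabla\rho_f(x) = 2\sum_i \int g^2\, |f|\,\nabla_i|f|\, d\hat X_i \;+\; 2\sum_i \int g\,(\nabla_i g)\,|f|^2\, d\hat X_i.
\end{equation*}
Applying Cauchy--Schwarz separately to each sum, with $g|f|$ serving as the shared factor in each inner product, produces pointwise $|\nabla\rho_f(x)| \le 2\sqrt{\rho_f(x)}\bigl(\sqrt{T(x)}+\sqrt{R(x)}\bigr)$, where
\begin{equation*}
T(x)=\sum_i\int g^2|\nabla_i f|^2\, d\hat X_i, \qquad R(x)=\sum_i\int |f|^2 |\nabla_i g|^2\, d\hat X_i.
\end{equation*}
Dividing by $2\sqrt{\rho_f}$, squaring, integrating over $x$, and invoking Cauchy--Schwarz once more on the cross term $\int\!\sqrt{RT}\,dx \le (\int R\, dx)^{1/2}(\int T\, dx)^{1/2}$ yields
\begin{equation*}
\int_{\R^3}|\nabla\sqrt{\rho_f}|^2\,dx \;\le\; \Bigl(\sqrt{Q(f)}+\sqrt{\textstyle\sum_i\int |f|^2|\nabla_i g|^2\, dX}\Bigr)^{\!2}.
\end{equation*}

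The main task is then the Hardy-type bound
\begin{equation*}
\sum_i \int|f|^2|\nabla_i g|^2\, dX \;\le\; 4\,Q(f),
\end{equation*}
which, combined with the preceding inequality, yields $\int|\nabla\sqrt{\rho_f}|^2 dx\le(\sqrt{Q(f)}+2\sqrt{Q(f)})^2 = 9\,Q(f)$ and hence the proposition. To prove this bound, fix all coordinates except $x_i$; only the pair-terms of $g$ involving $i$ contribute to $\nabla_i g$, so $\nabla_i g = \nabla_i\phi_i$ where $\phi_i(x_i)=\sum_{j\ne i}|x_i-x_j|^{-1}$ is a positive function of $x_i$, harmonic on $\R^3$ away from its Coulomb poles, and $\phi_i\le g$. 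For any such positive harmonic $\phi$, the identity $|\nabla\phi|^2 = \nabla\cdot(\phi\nabla\phi)$ holds off the poles; integrating by parts in $x_i$ and then applying Cauchy--Schwarz,
\begin{equation*}
\int f^2|\nabla\phi|^2\, dx = -2\int f\,\phi\,(\nabla\phi)\!\cdot\!\nabla f\, dx \;\le\; 2\Bigl(\!\int f^2|\nabla\phi|^2\, dx\!\Bigr)^{\!1/2}\!\Bigl(\!\int \phi^2|\nabla f|^2\, dx\!\Bigr)^{\!1/2},
\end{equation*}
from which $\int f^2|\nabla\phi|^2\, dx \le 4\int \phi^2|\nabla f|^2\, dx$. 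Applied with $\phi=\phi_i$, using $\phi_i\le g$, and then integrated over the remaining coordinates and summed over $i$, this gives the claimed bound on $\sum_i\int|f|^2|\nabla_i g|^2\, dX$.

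The principal technical difficulty lies in the Hardy step: the distributional Laplacian $\Delta_i\phi_i = -4\pi\sum_{j\ne i}\delta(x_i-x_j)$ is not literally zero, so the integration by parts above is valid only off the diagonals, while on the other hand $f \in H^1(\R^{3N},g^2\,dX)$ need not vanish on the diagonals $\{x_i=x_j\}$. I would handle this by first running the whole argument for $f$ smooth and compactly supported away from all diagonals, for which the boundary terms disappear and both inequalities hold literally, and then extending to the full form domain by a density/monotone-convergence argument, verifying along the way that the delta contributions from $\Delta_i\phi_i$ enter with a favorable sign that only strengthens the Cauchy--Schwarz step.
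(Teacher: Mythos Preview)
Your argument has a genuine gap at the Hardy step. The inequality
\[
\sum_i \int_{\R^{3N}} |f|^2\,|\nabla_i g|^2\, dX \;\le\; 4\,Q(f)
\]
is simply false for general $f$ in the form domain. Near a diagonal $x_i=x_j$ one has $|\nabla_i g|^2\sim |x_i-x_j|^{-4}$, which is \emph{not} locally integrable in $\R^3$, so for any smooth $f$ that does not vanish on the diagonals the left side is $+\infty$ while $Q(f)<\infty$. (Take, e.g., $f$ equal to a nonzero constant on a ball and check that such $f$ lies in $H^1(\R^{3N},g^2\,dX)$; the paper's first Remark explicitly points out that functions in the form domain need not vanish for coinciding arguments.) Your proposed rescue---prove everything for smooth $f$ supported away from the diagonals and then pass to the limit---does not work, because that class is \emph{not} a form core for $Q$: approximating a locally constant $f$ by functions cut off at distance $\epsilon$ from a diagonal forces $Q(f_\epsilon)\gtrsim \epsilon^{-1}\to\infty$, so the diagonals have positive capacity in $H^1(g^2\,dX)$. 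And the sign check you allude to goes the wrong way: the boundary contribution in the integration by parts is $+f(x_j)^2\cdot O(\epsilon^{-1})$, i.e., it adds a divergent \emph{positive} term to the side you are trying to bound from above.

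The paper sidesteps the whole issue by never differentiating $g$. It first symmetrizes $f$, then decomposes $g(X)^2$ into five types of terms according to how the pair factors involve the distinguished variable $x_1$, and for each piece exploits translation invariance of the Coulomb kernel: e.g.\ $(\nabla_1+\nabla_2)\,|x_1-x_2|^{-2}=0$ lets one write $\nabla_1\rho_1 = 2\,\Re\int |x_1-x_2|^{-2}\,\bar f\,(\nabla_1+\nabla_2)f$, with the derivative landing only on $f$. Cauchy--Schwarz then produces only terms of the form $\int g^2|\nabla_i f|^2$, and the constant $9$ comes from the worst piece (three gradients, giving a factor $3^2$). So the decomposition you need is not ``$\nabla f$'' versus ``$\nabla g$'', but rather a decomposition of the weight $g^2$ that allows the translation trick.
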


It is well known that in case $g\equiv 1$, (\ref{38}) holds with the prefactor $\frac 19$ replaced by $1$ \cite{h2o}.

\begin{proof}
  We first show that it is enough to prove the inequality for
  symmetric functions $f$. For any $f$, define $\tilde f$ by
\begin{equation}
\tilde f(X) = \left( \frac 1{N!} \sum_{\pi \in S_N} | f(x_{\pi(1)},\dots,x_{\pi(N)}) |^2 \right)^{1/2}\,,
\end{equation}
where $S_N$ denotes the permutation group. The function $\tilde f$ is obviously symmetric, and $\rho_f= \rho_{\tilde f}$. It is easy to see that the map $|f|^2 \mapsto Q(|f|)$ is convex (see, e.g., Lemma~A.1 in \cite{lsy}), hence
\begin{equation}
Q(\tilde f) \leq Q(|f|) \leq Q(f) \,.
\end{equation}
In particular, it is enough to prove (\ref{38}) for symmetric $f$.

Assume now that $f$ is symmetric. Then the density $\rho_f$ can alternatively be written as 
\begin{align}\nonumber
\rho_f(x_1) & = N \int_{\R^{3(N-1)}} g(X)^2 |f(X)|^2 dx_2\cdots dx_N \\ \nonumber & = N(N-1) \int_{\R^{3(N-1)}} \frac{1}{|x_1-x_2|^2} |f(X)|^2 dx_2\cdots dx_N \\ \nonumber & \quad + N(N-1)(N-2) \int_{\R^{3(N-1)}} \frac{1}{|x_1-x_2|} \frac 1{|x_1-x_3|} |f(X)|^2 dx_2\cdots dx_N \\ \nonumber & \quad + 2N(N-1)(N-2) \int_{\R^{3(N-1)}} \frac 1{|x_1-x_2|} \frac 1{|x_2-x_3|} |f(X)|^2 dx_2\cdots dx_N \\ \nonumber & \quad + N(N-1)(N-2)(N-3) \int_{\R^{3(N-1)}} \frac 1{|x_1-x_2|}\frac 1{|x_3-x_4|} |f(X)|^2 dx_2\cdots dx_N \\ & \quad + N \int_{\R^{3(N-1)}} R(x_2,\dots,x_N) |f(X)|^2 dx_2\cdots dx_N
\end{align}
where $R$ denotes the sum of all the terms in $g$ not depending on
$x_1$. Let us denote the various integrals on the right side by
$\rho_1,\dots,\rho_5$, and the prefactors in front of the integrals by
$n_1,\dots,n_5$. That is, $\rho_f = \sum_{j=1}^5 n_j \rho_j$.

Since $1/|x_1-x_2|$ is invariant under translations of both $x_1$ and $x_2$, we have 
\begin{equation}
(\nabla_1+\nabla_2) \frac 1{|x_1-x_2|^2} |f(X)|^2 = 2 \re \frac 1{|x_1-x_2|^2} f(X)^*(\nabla_1 + \nabla_2) f(X) \,.
\end{equation}
Integrating this identity over $x_2,\dots,x_N$, we obtain
\begin{equation}
\nabla_1 \rho_1(x_1) = 2 \re \int \frac 1{|x_1-x_2|^2} f(X)^* (\nabla_1 + \nabla_2)f(X) dx_2\cdots dx_N \,.
\end{equation}
The Schwarz inequality then implies that
\begin{equation}
|\nabla_1 \rho_1(x_1)|^2 \leq 4 \rho_1(x_1) \int \frac 1{|x_1-x_2|^2} |(\nabla_1 + \nabla_2)f(X)|^2 dx_2\cdots dx_N\,.
\end{equation}
In particular, 
\begin{align}\nonumber 
\int_{\R^3} |\nabla \sqrt {\rho_1}|^2  & = \frac 14 \int_{\R^3} \frac {1}{\rho_1(x)} \left| \nabla \rho_1(x) \right|^2 dx \\ &\leq  \int_{\R^{3N}} \frac 1{|x_1-x_2|^2} |(\nabla_1+\nabla_2)f|^2 dX \leq 4 \int_{\R^{3N}} \frac{1}{|x_1-x_2|^2}|\nabla_1 f|^2 dX \,, 
\end{align}
where we have again used the symmetry of $f$ in the last step. 

In the same way one proceeds for the remaining parts of the density. The result is  that
\begin{align}\nonumber
\int_{\R^3} |\nabla \sqrt {\rho_2}|^2 & \leq  \int_{\R^{3N}} \frac 1{|x_1-x_2|}\frac 1{|x_1-x_3|} |(\nabla_1+\nabla_2+\nabla_3)f|^2 dX \\ \nonumber & \leq 3 \int_{\R^{3N}} \frac{1}{|x_1-x_2|}\frac1{|x_1-x_3|}|\nabla_1 f|^2 dX \\ & \quad + 6 \int_{\R^{3N}} \frac{1}{|x_1-x_2|}\frac1{|x_1-x_3|}|\nabla_2 f|^2 dX \,,
\end{align}
\begin{align}\nonumber
\int_{\R^3} |\nabla \sqrt {\rho_3}|^2 & \leq  \int_{\R^{3N}} \frac 1{|x_1-x_2|}\frac 1{|x_2-x_3|} |(\nabla_1+\nabla_2+\nabla_3)f|^2 dX \\ \nonumber & \leq 6 \int_{\R^{3N}} \frac{1}{|x_1-x_2|}\frac1{|x_2-x_3|}|\nabla_1 f|^2 dX \\ & \quad + 3 \int_{\R^{3N}} \frac{1}{|x_1-x_2|}\frac1{|x_2-x_3|}|\nabla_2 f|^2 dX \,, 
\end{align}
\begin{align}\nonumber 
\int_{\R^3} |\nabla \sqrt {\rho_4}|^2  & \leq  \int_{\R^{3N}} \frac 1{|x_1-x_2|}\frac 1{|x_3-x_4|} |(\nabla_1+\nabla_2)f|^2 dX  \\ &\leq 4 \int_{\R^{3N}} \frac{1}{|x_1-x_2|}\frac1{|x_3-x_4|}|\nabla_1 f|^2 dX
\end{align}
and
\begin{equation}
\int_{\R^3} |\nabla \sqrt {\rho_5}|^2 \leq  \int_{\R^{3N}} R(x_2,\dots,x_N) |\nabla_1f|^2 dX \,.
\end{equation}
Summing up, using again the symmetry of $f$ and convexity of the map
$\rho\mapsto \int |\nabla\sqrt\rho|^2$, this gives
\begin{equation}
\int_{\R^3} |\nabla \sqrt \rho|^2 \leq \sum_{j=1}^5 n_j \int_{\R^3}|\nabla \sqrt\rho_j|^2 \leq 9 N \int_{\R^{3N}} g(X)^2 |\nabla_1 f|^2 dX = 9 Q(f)\,.
\end{equation}
This completes the proof of our claim.
\end{proof}

\section{Proof of Theorem~\ref{thm1}}\label{sec:final}

We will deduce Theorem~\ref{thm1} from Propositions~\ref{fL} and~\ref{lemma:db}, following a similar strategy as in \cite{LuSo}.

First, we note that it is enough to prove the theorem for $N> 2q$. If $N\leq 2q$, the statement can be easily deduced from  Proposition~\ref{lemma:db} alone. In fact, applying the Sobolev inequality $\|\nabla \varphi\|_2^2 \geq S \|\varphi\|_6^2$ and H\"older's inequality, we have 
\begin{equation}
Q(f) \geq \frac 19 \| \nabla \sqrt{\rho_f} \|_2^2 \geq \frac S9 \|\rho_f\|_3 \geq \frac S9   \|\rho_f\|_{5/3}^{5/3} \|\rho_f\|_1^{-2/3} \geq  \frac S 9 \frac 1{(2 q)^{2/3}} \int_{\R^3} \rho_f(x)^{5/3} dx
\end{equation}
in this case.

Assume now that $N> 2q$. For given $f\in H^1(\R^{3N},g(X)^2 dX) \cap \Aa_q^N$ and $\epsilon>0$, we can choose a cube $Q_0\subset \R^3$ such that 
\begin{equation}
\int_{Q_0} \rho_f(x) dx \geq 2q \quad \text{and} \quad \int_{Q_0} \rho_f(x)^{5/3}dx \geq (1-\epsilon) \int_{\R^3} \rho_f(x)^{5/3} dx\,.
\end{equation}
Suppose that $Q_0$ is divided into finitely many disjoint cubes $Q_i$. 
With the aid of  Proposition~\ref{fL}, we can bound
\begin{equation}\label{a1}
Q(f) \geq \sum_i  \sum_{j=1}^N  \int_{\R^{3N}} g(X)^2 \left| \nabla_j f(X) \right|^2 \chi_{Q_i}(x_j) dX \geq \sum_i  \frac k{|Q_i|^{2/3}} \left[ \int_{Q_i} \rho_f(x) dx  - q \right]_+ \,,
\end{equation}
where $[t]_+ = \max\{t,0\}$ denotes the positive part of a number
$t\in \R$. Note that each summand on the left side is obviously
non-negative, hence we can use the positive part on the right
side. Moreover, from Proposition~\ref{lemma:db}, we obtain the bound
\begin{equation}\label{a2}
Q(f) \geq \sum_i   \frac{1}{9} \int_{Q_i}
  \Big|\nabla \sqrt{\rho_f(x)}\Big|^2 dx \,.
\end{equation}
In combination, (\ref{a1}) and (\ref{a2}) imply that 
\begin{equation}\label{inla}
Q(f) \geq  \sum_i  \left( \frac {\lambda k}{|Q_i|^{2/3}} \left[ \int_{Q_i} \rho_f(x) dx  - q \right]_+  +  \frac{1-\lambda}{9} \int_{Q_i}
  \Big|\nabla \sqrt{\rho_f(x)}\Big|^2 dx \right)
\end{equation}
for any $0\leq \lambda\leq 1$.  

To construct the division of $Q_0$, we proceed as
follows \cite{LuSo}. Divide the cube $Q_0$ into 8 disjoint cubes of half the size. If the
integral of $\rho_f$ over one of the subcubes is less than $2q$, the subcube 
will not be divided further and will be marked $A$. If all subcubes
are marked $A$, then the division is undone and the cube $Q_0$ is
marked $B$. For all the subcubes with integral of $\rho_f$ bigger or
equal to $2q$, we iterate this procedure. At the end, $Q_0$ is thus
covered by finitely many disjoint subcubes of type either $A$ or $B$.

On every subcube $Q_i$ marked $B$, we have $2q\leq \int_{Q_i}
\rho_f(x)dx <16 q$. This implies, in particular, that 
\begin{equation}
 \left[ \int_{Q_i} \rho_f(x) dx - q \right]_+ \geq \frac 12  \int_{Q_i}
    \rho_f(x) dx\,.
\end{equation}
The Poincar{\'e}-Sobolev inequality on a cube
\cite[Thm.~8.12]{LL} yields
\begin{equation}
\int_{Q_i}
  \Big|\nabla \sqrt{\rho_f(x)}\Big|^2 dx \geq \tilde S \left\| \sqrt{\rho_f} - |Q_i|^{-1}\int_{Q_i}\! \sqrt{\rho_f}\, \right\|_{L^6(Q_i)}^2
\end{equation}
for some constant $\widetilde S>0$ independent of the size or location of $Q_i$. 
The triangle inequality in $L^6(Q_i)$ and a simple Schwarz inequality further imply that 
\begin{equation}
\left\| \sqrt{\rho_f} - |Q_i|^{-1}\int_{Q_i}\! \sqrt{\rho_f}\, \right\|_{L^6(Q_i)}^2 \geq \frac 12 \|\rho_f\|_3 - |Q_i|^{-2/3} \int_{Q_i} \rho_f\,.
\end{equation}
An application of H\"older's inequality finally gives
\begin{equation}\label{f3}
\int_{Q_i}
  \Big|\nabla \sqrt{\rho_f(x)}\Big|^2 dx \geq \frac{ \tilde S}{2} \frac{\int_{Q_i} \rho_f^{5/3}}{ \left( \int_{Q_i} \rho_f \right)^{2/3} } - \tilde S |Q_i|^{-2/3} \int_{Q_i} \rho_f\,.
\end{equation}
If we choose $0<\lambda<1$ in such  a way that 
\begin{equation}
\frac {k\lambda}{2} = \frac{(1-\lambda)\tilde S}9\,,
\end{equation}
i.e., $\lambda = (1+ 9k/(2\tilde S))^{-1}$, we conclude that the contribution of a $B$ cube to the energy in (\ref{inla}) is bigger or equal to 
\begin{equation}
\frac{2^{-11/3}}{\frac 2 k + \frac 9 {\tilde S}}   \frac 1{q^{2/3} }  \int_{Q_i} \rho_f(x)^{5/3}dx\,.
\end{equation}

Consider now a cube labeled $A$, where $\int_{Q_i}\rho_f < 2q$. Pick a $\kappa>2$ and assume, for the moment, that 
\begin{equation}
\int_{Q_i} \rho_f^{5/3} > \kappa |Q_i|^{-2/3} \left( \int_{Q_i}\rho_f \right)^{5/3} \,.
\end{equation}
In this case, it follows from (\ref{f3}) that 
\begin{equation}
\int_{Q_i}
  \Big|\nabla \sqrt{\rho_f(x)}\Big|^2 dx \geq \frac{ \tilde S}{2\kappa} \left( \kappa - 2\right)  \frac{\int_{Q_i} \rho_f^{5/3}}{ \left( 2 q \right)^{2/3} } \,.
\end{equation}
Hence, for our choice of $\lambda$,  the contribution from such an $A$ cube to the energy in (\ref{inla}) is at least
\begin{equation}
\left(1-\frac 2 \kappa\right) \frac{2^{-5/3}}{\frac 2 k + \frac 9 {\tilde S}}   \frac 1{q^{2/3} }  \int_{Q_i} \rho_f(x)^{5/3}dx\,.
\end{equation}

We are left with studying those $A$ cubes where 
\begin{equation}\label{saf}
\int_{Q_i} \rho_f^{5/3} \leq \kappa |Q_i|^{-2/3} \left( \int_{Q_i}\rho_f \right)^{5/3} \,.
\end{equation}
We shall show that their contribution to the total integral of
$\rho_f^{5/3}$ is dominated by the contribution of all the $B$
cubes. In order to see this, we note that our decomposition of $Q_0$
into subcubes can be organized in a tree. (Compare with Fig.~3 in
\cite{LuSo}.) Every $A$ cube can be associated with a $B$ cube, namely
if it can be found by going back in the tree from the $B$ cube,
possibly all the way to $Q_0$, and then one step forward. This allows
to divide the $A$ cubes into groups labeled by the $B$ cubes. This
division is not unique, in general, but this is not important; the
only thing that matters is that every $A$ cube can be associated with
a $B$ cube in this way. Pick a $B$ cube (call it $Q_B$) at level
$l\in\N$ of the tree, and let $\Aa(Q_B)$ be the set of all those
associated $A$ cubes that satisfy (\ref{saf}). Since at every level
$1\leq j\leq l$ of the tree there are at most $7$ $A$ cubes in
$\Aa(Q_B)$, we can bound
\begin{equation}
\sum_{Q_i\in \Aa(Q_B)} \int_{Q_i} \rho_f^{5/3} \leq  \frac{ 7 \kappa (2q)^{5/3}}{ |Q_0|^{2/3}}  \sum_{j=1}^l 4^j \leq \frac{ 7 \kappa (2q)^{5/3}}{ 3 |Q_0|^{2/3}} 4^{l+1}\,.
\end{equation}
On the other hand, for the associated $B$ cube $Q_B$ we have 
\begin{equation}
\int_{Q_B} \rho_f^{5/3} \geq \frac{\left(\int_{Q_b} \rho_f \right)^{5/3}}{|Q_B|^{2/3}} \geq \frac{ (2q)^{5/3}}{|Q_B|^{2/3}} = \frac{(2q)^{5/3}}{ |Q_0|^{2/3}} 4^l \,,
\end{equation}
and thus 
\begin{equation}
\sum_{Q_i\in \Aa(Q_B)} \int_{Q_i} \rho_f^{5/3} \leq \frac{28}{3} \kappa \int_{Q_B} \rho_f^{5/3}\,.
\end{equation}

In particular, the contribution to $\int \rho_f^{5/3}$ of all the $A$ cubes satisfying (\ref{saf}) is bounded by $28\kappa/3$ times the contribution of all the $B$ cubes. In other words, the contribution to $\int \rho_f^{5/3}$ of all the $B$ cubes is bounded from below by $(1+28\kappa/3)^{-1}$ times the contribution of {\em both} the $B$ cubes and the $A$ cubes satisfying (\ref{saf}).

After summing over all cubes, we thus get the lower bound
\begin{equation}
Q(f) \geq \frac {\tilde C}{q^{2/3}} \int_{Q_0} \rho_f(x)^{5/3} dx \geq \frac {\tilde C}{q^{2/3}}(1-\epsilon) \int_{\R^3} \rho_f(x)^{5/3} dx\,,
\end{equation}
with
\begin{equation}
\tilde C =  \frac{2^{-11/3}}{\frac 2 k + \frac 9 {\tilde S}} \sup_{\kappa>2} \min\left\{ \frac 1{1+\frac{28}3 \kappa} \, , \,  4 \left(1 - \frac 2\kappa\right) \right\} = \frac{2^{-2/3}}{\frac 2 k + \frac 9 {\tilde S}}\frac{239 - \sqrt{56977}}{48} > 0\,.
\end{equation}
Since $\epsilon>0$ was arbitrary,  this proves (\ref{met}) with a constant
\begin{equation}
C = \min\left\{ \tilde C \, , \, \frac{2^{-2/3}S}9 \right\}\,.
\end{equation}
\qed

\bigskip
\noindent {\it Acknowledgments.} Partial financial support by 
U.S. NSF grant PHY-1068285 (R.F.) and the NSERC (R.S.) is gratefully
acknowledged.


\end{document}